\documentclass[prl,floatfix,showpacs,aps,letterpaper,twocolumn]{revtex4}

\usepackage{amsmath}
\usepackage{amsfonts}
\usepackage{amssymb}
\usepackage{amsthm}
\usepackage{graphicx}
\usepackage{color}

\newcommand{\ZZ}{{\mathbb Z}} 
\newcommand{\FF}{{\mathbb F}}
\newcommand{\PauliGr}{{\cal P}}
\newcommand{\ZGr}{{\cal Z}}
\newcommand{\XGr}{{\cal X}}
\newcommand{\CXGr}{{c{\cal X}}}
\newcommand{\CZGr}{{c{\cal Z}}}
\newcommand{\ZZtn}{{\ZZ_2^n}}
\newcommand{\supp}[1]{{\mathrm{supp}(#1)}}
\newcommand{\ZOn}{{\{0,1\}^n}}

\newtheorem{theorem}{Theorem}
\newtheorem{lemma}{Lemma}
\newtheorem{corollary}{Corollary}

\begin{document}

\title{Scalable randomized benchmarking of non-Clifford gates}
\author{Andrew W. Cross}
\email{awcross@us.ibm.com}
\affiliation{IBM T.J. Watson Research Center, 1101 Kitchawan Road, Yorktown Heights, New York 10598}
\author{Easwar Magesan}
\affiliation{IBM T.J. Watson Research Center, 1101 Kitchawan Road, Yorktown Heights, New York 10598}
\author{Lev S. Bishop}
\affiliation{IBM T.J. Watson Research Center, 1101 Kitchawan Road, Yorktown Heights, New York 10598}
\author{John A. Smolin}
\affiliation{IBM T.J. Watson Research Center, 1101 Kitchawan Road, Yorktown Heights, New York 10598}
\author{Jay M. Gambetta}
\affiliation{IBM T.J. Watson Research Center, 1101 Kitchawan Road, Yorktown Heights, New York 10598}

\date{8 October 2015}

\pacs{03.65.Aa,03.67.-a,03.67.Ac}

\begin{abstract}
Randomized benchmarking is a widely used experimental technique to characterize the average error of quantum operations. Benchmarking procedures that scale to enable characterization of $n$-qubit circuits rely on efficient procedures for manipulating those circuits and, as such, have been limited to subgroups of the Clifford group. However, universal quantum computers require additional, non-Clifford gates to approximate arbitrary unitary transformations. We define a scalable randomized benchmarking procedure over $n$-qubit unitary matrices that correspond to protected non-Clifford gates for a class of stabilizer codes. We present efficient methods for representing and composing group elements, sampling them uniformly, and synthesizing corresponding $\mathrm{poly}(n)$-sized circuits. The procedure provides experimental access to two independent parameters that together characterize the average gate fidelity of a group element.
\end{abstract}

\maketitle

A key step to realizing a large-scale universal quantum computer is demonstrating that decoherence and other realistic imperfections are small enough to  be overcome by fault-tolerant quantum computing protocols \cite{gottesman09,raussendorf07}. Randomized benchmarking (RB)~\cite{emerson05,knill08,dankert09,magesan11} has become a standard experimental technique for characterizing the average error of quantum gates due in part to its insensitivity to state preparation and measurement errors. Benchmarking provides robust estimates of average gate fidelity \cite{magesan11,flammia14} and can characterize specific interleaved gate errors \cite{magesan12,gaebler12}, addressability errors \cite{gambetta12}, and leakage errors \cite{epstein14,wallman15,chasseur15}.

RB techniques that efficiently scale to many qubits have been limited to subgroups of gates in the Clifford group, since computations with this group are tractable \cite{magesan11}. However, the Clifford group is not enough for general quantum computations~\cite{gottesman97}.
 Previous work generalizes RB to groups that include non-Clifford gates \cite{barends14,dugas15}, but only on single qubits, a significant limitation.

In this Letter we present a scalable RB procedure that includes important non-Clifford circuits, such as circuits composed from $T=\sqrt[4]{Z}$ and controlled-NOT (CNOT) gates that naturally occur in fault-tolerant quantum computations. The $n$-qubit matrix groups we study are a generalization of the standard dihedral group and coincide in some cases with protected gates in stabilizer codes, such as $k$-dimensional color codes \cite{bombin07}. Circuits built from these gates cannot be universal but do constitute significant portions of magic state distillation protocols \cite{bravyi05,duclos15}, repeat-until-success circuits \cite{paetznick14}, and the vital quantum Fourier transform \cite{nc00}. We show that there are efficient methods for representing and composing group elements, sampling them uniformly, and synthesizing corresponding circuits whose size grows polynomially with the number of qubits $n$. The benchmarking procedure provides experimental access to two independent noise parameters through exponential decays of average sequence fidelities.

The quantum circuits we consider are products of CNOT gates $\Lambda_{12}(X)|u,v\rangle:=|u,u\oplus v\rangle$, bit-flip gates $X|u\rangle:=|u\oplus 1\rangle$, and single qubit $m$-phase gates $Z_m|u\rangle:=\omega_m^u|u\rangle$ where $\omega_m=e^{i2\pi/m}$. More concisely, the circuits of interest are given by the group
\begin{equation}
G_m := \langle \Lambda_{ij}(X), X(j), Z_m(j)\rangle/\langle \omega_m\rangle.
\end{equation}
We call this group a {\em CNOT-dihedral group} since it is generated by CNOTs and a single qubit dihedral group \footnote{A natural generalization of this group is $G_{p,m}:=\langle \Lambda^{(p)}(X), X(j), Z_m(j)\rangle/\langle \omega_m\rangle$ where $\Lambda^{(p)}(X)$ is a $p$-controlled-NOT gate.}. Although we prove certain results for general $m$, we focus mainly on the case of $m=2^k$ as this affords efficient benchmarking and contains various non-Clifford gates of interest, such as $T$, controlled-$S$, and controlled-controlled-$Z$ \footnote{Controlled-$S$ is defined by $\Lambda_{12}(S)|u,v\rangle:=i^{uv}|u,v\rangle$, and controlled-controlled-Z gate is defined by $\Lambda_{123}(Z)|u,v,w\rangle:=(-1)^{uvw}|u,v,w\rangle$ and is locally equivalent to a Toffoli gate.}.

{\em RB over $G_{2^k}$ --} The benchmarking procedure we present here both generalizes \cite{dugas15} \footnote{Reference \cite{dugas15} defines another $n$-qubit group $\langle \Lambda_{ij}(Z), X(j), Z_m(j)\rangle$. Our arguments imply that this group twirl produces a map with exponentially many parameters.} and extends naturally to interleaving gates to estimate individual gate fidelities~\cite{magesan12,gaebler12}. The procedure is as follows. Choose a sequence of $\ell+1$ unitary gates where the first $\ell$ gates are uniformly random elements $g_{j_1}$, $g_{j_2}$, $\dots$, $g_{j_\ell}$ of $G_{2^k}$ and the $(\ell+1)^{\mathrm{th}}$ gate is $g_{{\bf j}_\ell}^{-1}:=g_{j_1}^\dagger\dots g_{j_\ell}^{\dagger}$ where ${\bf j}_\ell$ denotes the $\ell$-tuple $(j_1,\dots,j_\ell)$ labeling the sequence. We show later that elements of $G_{2^k}$ can be efficiently sampled and $g_{{\bf j}_\ell}^{-1}$ can be efficiently computed. For each sequence, we prepare an input state $\rho$, apply $S_{{\bf j}_\ell}:=g_{{\bf j}_\ell}^{-1}g_{j_\ell}\dots g_{j_1}$, and measure an operator $E$. The overlap with $E$ is $\mathrm{Tr}[ES_{{\bf j}_\ell}(\rho)]$. Averaging this overlap over $K$ independent sequences of length $\ell$ gives an estimate of the average sequence fidelity $F_{\mathrm seq}(\ell,E,\rho):=\mathrm{Tr}[ES_\ell(\rho)]$ where $S_\ell(\rho):=\frac{1}{K}\sum_{{\bf j}_\ell} S_{{\bf j}_\ell}(\rho)$ is the average quantum channel.

The error operator of the final gate $g_{{\bf j}_\ell}$ is attributed to measurement error, perturbing $E$ to a new operator $E'$. We decompose the input state and this final measurement operator in the Pauli basis to give $\rho=\sum_P x_P P/2^n$ and $E'=\sum_P e_PP$. The average sequence fidelity is \cite{gambetta12}
\begin{equation}
F_{\mathrm seq}(\ell,E,\rho) = A_Z \alpha_Z^\ell + A_R \alpha_R^\ell + e_I
\end{equation}
where $A_Z=\sum_{P\in\ZGr\setminus\{I\}} e_Px_P$ and $A_R=\sum_{P\in\PauliGr\setminus\ZGr}e_Px_P$. In a spirit similar to simultaneous RB \cite{gambetta12}, by choosing appropriate input states, namely $|0\dots 0\rangle$ and $|+\dots +\rangle:=\sum_{b\in\{0,1\}^n}|b\rangle$, each of the two exponential decays can be observed independently. State preparation errors may lead to deviation from single exponential decay, but this is detectable. The channel parameters $\alpha_Z$ and $\alpha_R$ can be extracted by fitting the average sequence fidelity to such a decay. The corresponding depolarizing channel parameter is a weighted average $\alpha=(\alpha_Z+2^n\alpha_R)/(2^n+1)$, and the average gate error is given by $r=(2^n-1)(1-\alpha)/2^n$ (see \cite{magesan11}).

Several natural questions arise from this work. First, one might address the asymptotically optimal cost of circuit synthesis for elements of the CNOT-dihedral groups, as well as the practical question of finding optimal circuit decompositions for elements of the smallest groups. We expect optimal circuits are computationally hard to find as $n$ grows, but experimentally it is important to minimize the number of two-qubit gates. Second, unlike the Clifford group, the CNOT-dihedral group is not a $2$-design \cite{dankert09}. It would be interesting to find a group (or set) containing a non-Clifford gate and that is a $2$-design and in which benchmarking can be done efficiently. Third, our results show that we can efficiently perform RB. However, we have not addressed the precise sense in which quantum computations over the CNOT-dihedral group can be efficiently simulated. This may be a subtle problem \cite{ni13,jozsa14}. Lastly, there are generalized stabilizer formalisms, such as \cite{ni15}, and it is natural to wonder if one of these describes how this group acts on some set of states.

The remainder of the Letter is devoted to proving the various results utilized in the benchmarking procedure: canonical decomposition of $G_m$, efficient computation in $G_m$, and twirling over $G_m$, each of which is interesting in its own right. Let $m$ be general and let us briefly set some notation. The matrix representation of $G_m$ is set by identifying $g\in G_m$ to the matrix that maps $|0^n\rangle:=|00\dots 0\rangle$ to $|b\rangle:=|b_1b_2\dots b_n\rangle$ with unit phase. We define the phase-flip gates $Z|u\rangle:=(-1)^u|u\rangle$ and controlled-Z (CZ) gates $\Lambda_{12}(Z)|u,v\rangle:=(-1)^{uv}|u,v\rangle$. The support of a bit string $v\in\{0,1\}^n$ is $\supp{v}=\{j|v_j=1\}\subseteq [n]:=\{1,2,\dots,n\}$. We refer to $v$ and its support interchangeably, treating $v$ as a set and vice versa. Let $U$ be a single qubit gate and $U(v)$ denote the gate acting as $U$ only on qubits in the support of $v$. Given $J\subseteq [n]$ or elements $i,j,\dots\in [n]$, we also use the shorthand $U(J)$ and $U(i,j,\dots)$. $\PauliGr:=\langle X(j),Z(j)\rangle/\langle i\rangle$ denotes the $n$-qubit Pauli group, and we define $\XGr:=\langle X(j)|j\in [n]\rangle$, $\ZGr:=\langle Z(j)|j\in [n]\rangle$, $\CXGr:=\langle\Lambda_{ij}(X)|i,j\in [n],i\neq j\rangle$, and $\CZGr:=\langle\Lambda_{ij}(Z)|i,j\in [n],i<j\rangle$. 

{\em Canonical form of $G_m$ -- } Our first goal will be to put $G_m$ in a canonical form (the main result is contained in Theorem~\ref{thm:canonical}). The {\em rewriting identities} shown in Fig.~\ref{fig:rewrite} allow us to commute diagonal elements of $G_m$ through $\Lambda_{ij}(X)$ and $X(j)$ gates. The rules for bit-flip gates are a special case of the CNOT rules. The following Lemma follows directly from definitions and formalizes the role of the rewriting identities in understanding the group's structure.

\begin{figure}
\centering
\includegraphics[width=.47\textwidth]{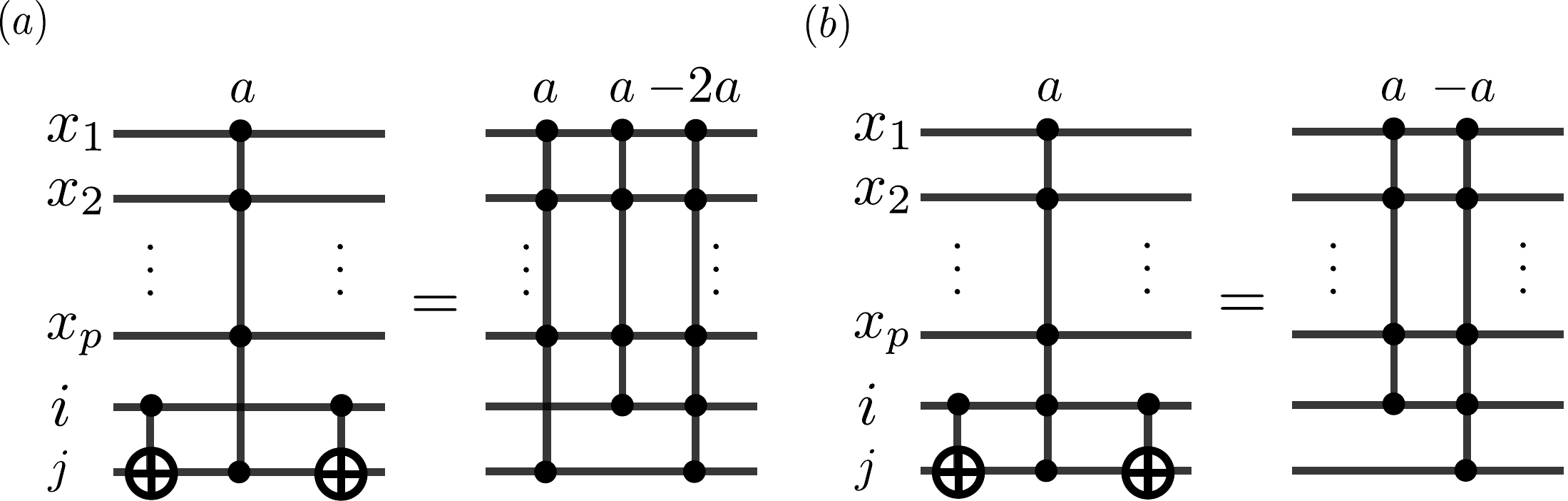}
\caption{Rewriting identities. Controlled-phase gate notation carrying the label $a$ denotes a controlled-$(Z_m)^a$ gate. (a) This is the only identity that increases the number of controls. (b) This identity preserves the number of controls.\label{fig:rewrite}}
\end{figure}

\begin{lemma}\label{lem:semidirect}
Let $W_m$ denote the subgroup of diagonal matrices of $G_m$ and let $\Pi=\langle \Lambda_{ij}(X), X(j)\rangle$ denote the subgroup of permutation matrices.
Then $G_m$ is isomorphic to a semi-direct product of groups $G_m\simeq W_m\rtimes \Pi$.
\end{lemma}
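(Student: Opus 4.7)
The plan is to verify the three standard criteria for an internal semidirect product $G_m = W_m \rtimes \Pi$: (i) $W_m \cap \Pi = \{I\}$, (ii) $W_m$ is normal in $G_m$, and (iii) $G_m = W_m \Pi$. The key observation throughout is that each generator of $\Pi$ --- namely $\Lambda_{ij}(X)$ and $X(j)$ --- permutes computational basis states and therefore has matrix entries only in $\{0,1\}$; hence every element $\pi\in \Pi$ is a genuine permutation matrix, say $\pi|b\rangle = |\sigma_\pi(b)\rangle$ for a permutation $\sigma_\pi$ of $\ZOn$.

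Point (i) follows immediately: a unitary that is simultaneously diagonal and a $\{0,1\}$-valued permutation matrix must be $I$. For (ii), given a diagonal $D\in W_m$ with $D|b\rangle=\lambda(b)|b\rangle$, a direct calculation yields $\pi D \pi^{-1}|b\rangle = \lambda(\sigma_\pi^{-1}(b))|b\rangle$, which is again diagonal; since $G_m$ is closed under conjugation and $W_m$ is \emph{by definition} the full diagonal subgroup of $G_m$, it follows that $\pi D \pi^{-1} \in W_m$.

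Step (iii) is the main content. My approach is to show that every $g \in G_m$ sends each computational basis state to a basis state up to an overall phase. Granting this, $g$ induces a permutation $\sigma_g$ of $\ZOn$, and the assignment $g\mapsto \sigma_g$ is a group homomorphism $G_m\to S_{2^n}$ that annihilates the $Z_m(j)$ generators and, on the remaining generators, reproduces exactly the permutation action defining $\Pi$. Thus for each $g$ there exists $p\in \Pi$ with $\sigma_g=\sigma_p$, which forces $gp^{-1}$ to be diagonal; hence $gp^{-1}\in W_m$ and we obtain the factorization $g=(gp^{-1})\,p \in W_m\cdot \Pi$.

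The main obstacle is the basis-to-phased-basis claim just invoked. It is immediate for each individual generator and must be promoted to arbitrary words in the generators. This is precisely what the rewriting identities of Fig.~\ref{fig:rewrite} accomplish: commuting a phase or controlled-phase factor past $X(j)$ or $\Lambda_{ij}(X)$ produces another diagonal factor (possibly acquiring an extra control) multiplying the same permutation generator. Iterating this rearrangement on any word in the generators gathers all diagonal factors to the left of a permutation word, which both verifies the basis-preservation claim and gives a direct constructive proof of (iii). Combined with (i) and (ii), the standard semidirect-product theorem delivers $G_m \simeq W_m \rtimes \Pi$.
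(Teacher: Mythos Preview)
Your proof is correct and verifies the same three semidirect-product criteria as the paper. The main difference is in emphasis: the paper establishes both normality and the factorization $G_m=\Pi W_m$ by explicitly pushing each $Z_m$ generator through a word using the rewriting identities, whereas your normality argument exploits directly that $W_m$ is \emph{defined} as the full diagonal subgroup of $G_m$ (so conjugation by a permutation matrix, which preserves diagonality, automatically lands back in $W_m$)---this is slicker and sidesteps the identities entirely. Two small points of execution. In (ii) you only treat conjugation by $\pi\in\Pi$; to conclude $W_m\triangleleft G_m$ either note that the remaining generators $Z_m(j)$ are diagonal and hence trivially normalize $W_m$, or postpone (ii) until after (iii) so that $G_m=W_m\Pi$ is available. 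In (iii) the rewriting identities are not actually needed for the ``basis-to-phased-basis'' claim: every generator of $G_m$ is a monomial matrix (exactly one nonzero entry per row and column), and monomial matrices are closed under products, so each $g\in G_m$ automatically sends $|b\rangle$ to a phase times some $|b'\rangle$. With that observation your homomorphism $g\mapsto\sigma_g$ and the factorization $g=(gp^{-1})p$ are immediate, and the argument becomes entirely self-contained without Fig.~\ref{fig:rewrite}.
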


The proof of Lemma~\ref{lem:semidirect} is given in the Supplemental Material~\cite{supp}. Note that by definition, $\Pi=\XGr\rtimes \CXGr$. Since $\XGr\simeq \FF_2^n$ and $\CXGr\simeq\mathrm{GL}_n(\FF_2)$, each element of $\Pi$ can be associated to an $n$-bit string $c\in\FF_2^n$ and an $n$ by $n$ invertible $0-1$ linear transformation $B\in \mathrm{GL}_n(\FF_2)$ such that $\pi|b\rangle=|Bb\oplus c\rangle$. Here, $\FF_2$ denotes the field with two elements. Furthermore, $|\Pi|=2^n\prod_{\ell=0}^{n-1}(2^n-2^\ell)$.

It remains to better understand $W_m$ (see Lemma \ref{lemma:W} for the main result). Let $D_m$ denote the group of $2^n$ by $2^n$ diagonal unitary matrices $D$ with elements $\langle b|D|b\rangle=\omega_m^{f(b)}$. Here $f:\FF_2^n\rightarrow\ZZ_m$ is a function that assigns $m$th roots of unity to the diagonal and $\ZZ_m$ is the ring of integers modulo $m$. Since $G_m$ is generated by permutation matrices and products $m$-phase gates, $W_m\subseteq D_m$.

Let ${\cal R}\subset \ZZ_m\left[x_1,\dots,x_n\right]$ denote the polynomial ring whose elements are $p(x):=p(x_1,\dots,x_n)=\sum_{\alpha\in \ZOn} p_\alpha x^\alpha$ where $\alpha=\alpha_1\dots\alpha_n$ is a multi-index, $p_\alpha\in \ZZ_m$, and $x^\alpha=x_1^{\alpha_1}\dots x_n^{\alpha_n}$ is a monomial. The multi-index takes values in $\ZOn$ as a convenient notation since we will evaluate $p(x)$ on binary strings, so $x_j^2=x_j$. The degree of a monomial is denoted $|\alpha|$. We mainly consider ${\cal R}$ as an additive group. The next Lemma follows from the definition of group isomorphism and the fact that each function $f(b)$ can be expressed as a polynomial in ${\cal R}$.

\begin{lemma}\label{lem:iso}
Let $p(b)$ denote evaluation of $p$ on the $n$-bit binary string $b=b_1\dots b_n$ with operations in $\ZZ_m$. The function $\Phi:{\cal R}\rightarrow D_m$ given by $\langle b|\Phi(p)|b\rangle=\omega_m^{p(b)}$ is a group isomorphism. 
\end{lemma}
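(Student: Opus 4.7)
The plan is to check the two defining properties of an isomorphism of abelian groups separately.

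\textbf{Homomorphism.} Viewing $\cal R$ as an additive group, I would verify that $\Phi(p+q) = \Phi(p)\Phi(q)$ and $\Phi(0) = I$ by comparing diagonal entries. For any $b\in\ZOn$,
\[
\langle b|\Phi(p+q)|b\rangle = \omega_m^{(p+q)(b)} = \omega_m^{p(b)}\omega_m^{q(b)} = \langle b|\Phi(p)\Phi(q)|b\rangle,
\]
where the sum inside the exponent is reduced modulo $m$ (valid since $\omega_m^m = 1$), and the last equality uses that $\Phi(p)$ and $\Phi(q)$ are diagonal so their product is entrywise. The zero polynomial maps to the identity matrix.

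\textbf{Bijectivity.} The cleanest route is a cardinality match plus injectivity. As a free $\ZZ_m$-module, $\cal R$ has basis $\{x^\alpha:\alpha\in\ZOn\}$, so $|{\cal R}|=m^{2^n}$; likewise $|D_m|=m^{2^n}$ since each of the $2^n$ diagonal entries is an independent $m$-th root of unity. It therefore suffices to prove injectivity. Suppose $\Phi(p)=I$, i.e.~$p(b)\equiv 0\pmod{m}$ for every $b\in\ZOn$. The key evaluation identity is
\[
p(b) \;=\; \sum_{\alpha:\,\supp{\alpha}\subseteq\supp{b}} p_\alpha,
\]
obtained by expanding $p(x)=\sum_\alpha p_\alpha x^\alpha$ and using $b_j\in\{0,1\}$. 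I would then induct on the Hamming weight $|b|$: at $b=0^n$ the identity gives $p_0=0$; assuming inductively that $p_\alpha=0$ for every $\alpha$ strictly contained in $b$, the identity collapses to $p_b=0$. Hence every coefficient vanishes.

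\textbf{Where care is needed.} The one subtlety worth flagging is why the restriction of multi-indices to $\ZOn$ is essential. Without it, $\cal R$ would be infinite and distinct polynomials would agree on binary inputs (e.g.~$x_1^2$ and $x_1$), so $\Phi$ would fail to be injective. Restricting to squarefree monomials is precisely what makes $\cal R$ finite with $|{\cal R}|=|D_m|$ and what makes the evaluation identity above cleanly isolate a single coefficient at each step of the weight induction. No step of the argument is hard; the content of the lemma is entirely that multilinear polynomials over $\ZZ_m$ and $\ZZ_m$-valued functions on $\ZOn$ are naturally in bijection.
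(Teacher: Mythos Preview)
Your proof is correct. The homomorphism check matches the paper's, but the bijectivity argument differs. The paper establishes injectivity and surjectivity separately: for surjectivity it gives the explicit Lagrange-type interpolation $f(x)=\sum_{s}f(s)\delta(x,s)$ with $\delta(x,s)=\prod_j(x_j\ \text{or}\ 1-x_j)\in{\cal R}$, while its injectivity step simply asserts that $\omega_m^{p(b)}=\omega_m^{p'(b)}$ for all $b$ forces $p=p'$, without spelling out why equal evaluations on $\ZOn$ determine a multilinear polynomial uniquely. Your weight-induction on $p(b)=\sum_{\alpha\subseteq b}p_\alpha$ supplies exactly that missing justification, and your cardinality count then replaces the paper's constructive surjectivity. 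What you gain is a cleaner and more self-contained injectivity argument; what the paper's route buys is an explicit inverse map (useful later when one wants to read off a polynomial from a given diagonal unitary). Either approach is perfectly adequate here.
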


The proof of Lemma~\ref{lem:iso} is given in the Supplemental Material~\cite{supp}. The rewriting identities give the action of $\Pi$ on $W_m$ by conjugation. Let $\bar{W}_m:=\langle Z_m(j)\rangle$. Based on a similar application of the rewriting identities as in Lemma~\ref{lem:semidirect}, $W_m = \langle \pi\bar{W}_m\pi^\dagger\ \vert\ \pi\in\Pi\rangle$. Since $W_m\subseteq D_m\simeq {\cal R}$, $\Phi^{-1}$ associates a polynomial in ${\cal R}$ to each element of $W_m$. By our chosen convention, matrices representing elements $w\in W_m$ are given modulo a global phase factor $\langle\omega_m\rangle$ such that $w|0^n\rangle=|0^n\rangle$. Therefore the preimages $\Phi^{-1}(w)$ have zero constant term, i.e.~$p_\alpha=0$ when $|\alpha|=0$. Through $\Phi$, the rewriting identities define an action of $\Pi$ on ${\cal R}$ that respectively takes $x_1x_2\dots x_px_j$ to
\begin{equation}
-2x_1x_2\dots x_px_ix_j+x_1x_2\dots x_px_i+x_1x_2\dots x_px_j \label{eq:r1}
\end{equation}
and $x_1x_2\dots x_px_ix_j$ to
\begin{equation}
-x_1x_2\dots x_px_ix_j+x_1x_2\dots x_px_i.\label{eq:r3}
\end{equation}
Eq.~\ref{eq:r1} increments the degree of a monomial and multiplies its coefficient by $-2$, whereas Eq.~\ref{eq:r3} does not change the degree. Another way to understand iterated applications of Eq.~\ref{eq:r1} is to observe that
\begin{equation}\label{eq:xor}
y_1\oplus y_2\oplus\dots\oplus y_N = \sum_{\alpha\in\ZZ_2^N,|\alpha|\neq 0}(-2)^{|\alpha|-1}y^\alpha.
\end{equation}
This fact relates how single qubit $Z_m$ gates acting on mod $2$ linear combinations of input bits are equivalent to products of certain controlled-phase gates.

There is an element of $W_m$ corresponding to each monomial term of non-zero degree, and the coefficient of this term has the form $p_{\alpha}\in(-2)^{|\alpha|-1}\ZZ_m$, as we will now see \cite{supp}. We choose a subset of qubits $J$, fix any $j\in J$, and define a permutation gate and corresponding polynomial
\begin{equation}
\pi_{J,j} :=\prod_{\substack{k\in J\\ k\neq j}} \Lambda_{kj}(X);\ \ p_J(x) :=\sum_{\substack{\alpha\subseteq J\\|\alpha|\neq 0}}(-2)^{|\alpha|-1}x^\alpha.\label{eq:pJ}
\end{equation}
By Eq.~\ref{eq:xor}, $\Phi(p_J)=\pi_{J,j}Z_m(j)\pi_{J,j}^\dagger\in W_m$, i.e. this circuit has a polynomial with one term of degree $|J|$. Since $\Phi(Z_m(j))=x_j$, scaled monomials of successive degrees $|\alpha|$ and with coefficients in $(-2)^{|\alpha|-1}\ZZ_m$ can be generated inductively by composing these circuits. Take all linear combinations of these over $\ZZ_m$ to find

\begin{lemma}\label{lemma:W}
$W_m$ is isomorphic to the subgroup ${\cal W}<{\cal R}$ given by 
\begin{equation}
\{ p\in{\cal R}\ \vert\ p_\emptyset=0\ \mathrm{and}\ \forall\alpha\neq\emptyset,\ p_{\alpha}\in(-2)^{|\alpha|-1}\ZZ_m\ \}.
\end{equation}
\end{lemma}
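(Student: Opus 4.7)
The plan is to establish the two inclusions $\Phi^{-1}(W_m)\subseteq{\cal W}$ and ${\cal W}\subseteq\Phi^{-1}(W_m)$, using the generating description $W_m=\langle\pi\bar{W}_m\pi^\dagger\mid\pi\in\Pi\rangle$ noted just before the lemma together with the $\Pi$-action on ${\cal R}$ encoded by Eqs.~(\ref{eq:r1}) and (\ref{eq:r3}).

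For $\Phi^{-1}(W_m)\subseteq{\cal W}$, the strategy is to verify that ${\cal W}$ contains $\Phi^{-1}(Z_m(j))=x_j$ for every generator of $\bar{W}_m$ and is preserved both by addition (the polynomial version of multiplication in $D_m$) and by the two rewriting identities. Closure under addition is immediate since $(-2)^{|\alpha|-1}\ZZ_m$ is a subgroup of $\ZZ_m$ and sums of polynomials with vanishing constant term still vanish at $\emptyset$. For Eq.~(\ref{eq:r1}), a term $c\,x_1\cdots x_p x_j$ of degree $p+1$ with $c\in(-2)^p\ZZ_m$ is sent to a degree-$(p+2)$ term with coefficient $-2c\in(-2)^{p+1}\ZZ_m$ together with two degree-$(p+1)$ terms of coefficient $c$; both divisibility constraints required by the definition of ${\cal W}$ are met. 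For Eq.~(\ref{eq:r3}) the analogous check is easier, as the degree does not grow and the divisibility can only improve. Neither rule introduces a constant term, so $p_\emptyset=0$ persists, and every polynomial reached from $x_j$ by iterated $\Pi$-conjugation and addition lies in ${\cal W}$.

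For the reverse inclusion, the strategy is to show by induction on $|\alpha|$ that every monomial $(-2)^{|\alpha|-1}x^\alpha$ with $\alpha\neq\emptyset$ lies in $\Phi^{-1}(W_m)$, whence every element of ${\cal W}$, being a $\ZZ_m$-linear combination of such monomials, does too. The base case $|\alpha|=1$ is $x_j=\Phi^{-1}(Z_m(j))$. For the inductive step, take $J=\alpha$ in Eq.~(\ref{eq:pJ}): the circuit $\pi_{J,j}Z_m(j)\pi_{J,j}^\dagger$ lies in $W_m$ and is mapped by $\Phi^{-1}$ to $p_J(x)=\sum_{\emptyset\neq\beta\subseteq J}(-2)^{|\beta|-1}x^\beta$. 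Subtracting the contributions $(-2)^{|\beta|-1}x^\beta$ for $\emptyset\neq\beta\subsetneq J$, each in $\Phi^{-1}(W_m)$ by the inductive hypothesis, isolates $(-2)^{|J|-1}x^J$ as an element of $\Phi^{-1}(W_m)$, completing the induction.

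I expect the main obstacle to be justifying the identity $\Phi^{-1}(\pi_{J,j}Z_m(j)\pi_{J,j}^\dagger)=p_J$ of Eq.~(\ref{eq:pJ}). This reduces to observing that $\pi_{J,j}Z_m(j)\pi_{J,j}^\dagger$ applies the phase $\omega_m$ raised to the XOR of the bits indexed by $J$, and then invoking Eq.~(\ref{eq:xor}) to expand that XOR as the claimed polynomial over $\ZZ_m$; one can alternatively derive the same formula by iterating Eq.~(\ref{eq:r1}) a total of $|J|-1$ times and collecting terms. Once this identity is in hand, the surrounding divisibility bookkeeping and induction are routine.
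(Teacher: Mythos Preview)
Your proposal is correct and follows essentially the same approach as the paper's own proof: the reverse inclusion ${\cal W}\subseteq\Phi^{-1}(W_m)$ is obtained by exactly the same induction on $|\alpha|$ using $p_J$ and subtracting lower-degree contributions, and the forward inclusion $\Phi^{-1}(W_m)\subseteq{\cal W}$ is argued, as in the paper, from the fact that Eqs.~(\ref{eq:r1}) and (\ref{eq:r3}) exhaust the $\Pi$-action on monomials and preserve the divisibility conditions. Your write-up is simply more explicit about the closure checks than the paper's one-line justification.
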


We can now directly compute $|G_m|$.
\begin{corollary}
\begin{equation}
|G_m| = 2^n\prod_{\ell=0}^{n-1}(2^n-2^\ell)\prod_{t=1}^n \left(\frac{LCM(2^{t-1},m)}{2^{t-1}}\right)^{{n\choose t}}.\label{eq:order}
\end{equation}
\end{corollary}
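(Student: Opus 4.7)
The plan is to combine the semidirect-product structure from Lemma~\ref{lem:semidirect} with the explicit description of $W_m$ from Lemma~\ref{lemma:W}. Since $G_m\simeq W_m\rtimes\Pi$, we have $|G_m|=|W_m|\cdot|\Pi|$, and the factor $|\Pi|=2^n\prod_{\ell=0}^{n-1}(2^n-2^\ell)$ is already recorded in the text (it is just $|\XGr|\cdot|\mathrm{GL}_n(\FF_2)|$ from the semidirect decomposition $\Pi=\XGr\rtimes\CXGr$). The whole problem thus reduces to counting $|{\cal W}|$, the additive group of polynomials in ${\cal R}$ with vanishing constant term and with the coefficient of each monomial $x^\alpha$ constrained to lie in $(-2)^{|\alpha|-1}\ZZ_m\subseteq\ZZ_m$.

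First I would observe that the coefficients of distinct monomials $x^\alpha$ are independent in the direct-sum decomposition of $\cal R$ as an abelian group, so $|{\cal W}|$ factors as a product over $\alpha\in\ZOn\setminus\{\emptyset\}$ of the sizes of the cyclic subgroups $(-2)^{|\alpha|-1}\ZZ_m\le\ZZ_m$. Since $-1$ is a unit in $\ZZ_m$, this subgroup is the same as $2^{|\alpha|-1}\ZZ_m$, the subgroup of $\ZZ_m$ generated by $2^{|\alpha|-1}$. Grouping the product by the degree $t=|\alpha|$, and noting that there are exactly $\binom{n}{t}$ multi-indices $\alpha\in\ZOn$ with $|\alpha|=t$, gives
\begin{equation}
|{\cal W}|=\prod_{t=1}^n\bigl(\,|\,2^{t-1}\ZZ_m\,|\,\bigr)^{\binom{n}{t}}.
\end{equation}

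Next I would evaluate the order of the cyclic subgroup generated by $2^{t-1}$ in $\ZZ_m$. In any cyclic group $\ZZ_m$, the order of the subgroup generated by an integer $a$ is $m/\gcd(a,m)$, and the identity $\gcd(a,m)\cdot\mathrm{LCM}(a,m)=am$ rewrites this as $\mathrm{LCM}(a,m)/a$. Applying this with $a=2^{t-1}$ yields $|2^{t-1}\ZZ_m|=\mathrm{LCM}(2^{t-1},m)/2^{t-1}$. Substituting back produces the second product in Eq.~\ref{eq:order}.

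Finally, multiplying by the known value of $|\Pi|$ yields Eq.~\ref{eq:order}. The only step with any subtlety is identifying $|(-2)^{t-1}\ZZ_m|$ with $\mathrm{LCM}(2^{t-1},m)/2^{t-1}$; everything else is bookkeeping. I do not expect any real obstacle, since Lemma~\ref{lemma:W} has already done the hard work of establishing that the coefficient constraints are \emph{precisely} $p_\alpha\in(-2)^{|\alpha|-1}\ZZ_m$ with no further relations among different monomials, which is what lets the counting split into a product.
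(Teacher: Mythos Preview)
Your proposal is correct and follows essentially the same approach as the paper: both use $|G_m|=|W_m|\cdot|\Pi|$ from the semidirect-product structure, identify $W_m\simeq{\cal W}$ as a direct sum over monomials of the cyclic groups $(-2)^{|\alpha|-1}\ZZ_m$, and compute the size of each as $\mathrm{LCM}(2^{t-1},m)/2^{t-1}$ before grouping by degree. The paper's version is simply more terse, packaging these steps by defining $o_m(a)=\mathrm{LCM}(a,m)/a$ and writing $W_m\simeq\prod_{t=1}^n\ZZ_{o_m(2^{t-1})}^{\binom{n}{t}}$.
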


\begin{proof}
Let $o_m(a)=\mathrm{LCM}(a,m)/a$ denote the order of $a$ in $\ZZ_m$. Observe that $(-2)^{|\alpha|-1}\ZZ_m\simeq \ZZ_{o_m(2^{|\alpha|-1})}$ as additive groups. Therefore $W_m$ is isomorphic to a direct product of additive cyclic groups $A_m:=\prod_{t=1}^n \ZZ_{o_m(2^{t-1})}^{{n\choose t}}$. This shows that $|G_m| =|A_m||\Pi|$.
\end{proof}

Putting everything together, we have

\begin{theorem}\label{thm:canonical}
Any element of $G_m$ can be written in canonical form as the composition of a sequence of phase gates (comprising an element of $W_m$ whose form is given in Lemma \ref{lemma:W}), a sequence of CNOT gates, and a sequence of bit-flip gates.
\end{theorem}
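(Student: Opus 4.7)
The plan is to chain together the two semidirect product decompositions already on the table. By Lemma~\ref{lem:semidirect}, $G_m \simeq W_m \rtimes \Pi$, so every $g \in G_m$ factors uniquely as $g = w\pi$ with $w \in W_m$ and $\pi \in \Pi$. Operationally, this is achieved by sweeping the rewriting identities of Fig.~\ref{fig:rewrite} across an arbitrary word in the generators: starting from one end, iteratively commute each diagonal element (initially a $Z_m(j)$, possibly growing into a multi-controlled phase after an application of Fig.~\ref{fig:rewrite}(a)) past every $\Lambda_{ij}(X)$ and $X(j)$. Lemma~\ref{lemma:W} guarantees that the accumulated diagonal factor lies in the subgroup ${\cal W}\subset {\cal R}$ with the specified $(-2)^{|\alpha|-1}$ coefficient structure.

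Next I would further split $\pi$. A direct computation (e.g. $\Lambda_{kj}(X)\,X(k)\,\Lambda_{kj}(X) = X(k)X(j)$ and $\Lambda_{kj}(X)\,X(j)\,\Lambda_{kj}(X) = X(j)$) shows that conjugation by any element of $\CXGr$ sends $\XGr$ into itself, so $\XGr$ is normal in $\Pi$ and $\Pi = \XGr\rtimes\CXGr$. This yields a unique factorization $\pi = c\,x$ with $c\in\CXGr$ and $x\in\XGr$. Putting the two decompositions together gives $g = w\,c\,x$, which is the claimed canonical form; the ordering of the $\CXGr$ and $\XGr$ pieces can be swapped freely using the same conjugation rule, should the opposite convention be preferred.

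The main obstacle, such as it is, has already been absorbed into Lemmas~\ref{lem:semidirect} and~\ref{lemma:W}: existence of the first factorization requires confirming that the rewriting identities of Fig.~\ref{fig:rewrite} suffice to move all diagonal content to one side of the circuit, and the characterization of the resulting diagonal element requires the polynomial description of ${\cal W}$ via $\Phi$. Once those are in hand, the present theorem is an immediate corollary of composing the two semidirect product decompositions, so I would present it in essentially one line with forward references to those two lemmas for the heavy lifting.
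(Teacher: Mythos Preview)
Your proposal is correct and is exactly the approach the paper takes: the paper presents Theorem~\ref{thm:canonical} with the preamble ``Putting everything together, we have,'' treating it as an immediate corollary of Lemma~\ref{lem:semidirect} together with the already-noted decomposition $\Pi=\XGr\rtimes\CXGr$ and the description of $W_m$ in Lemma~\ref{lemma:W}. Your ordering $g=w\,c\,x$ differs from the paper's later convention $g=uvw$ (with $u\in\XGr$, $v\in\CXGr$, $w\in W_m$), but as you observe this is immaterial since both $W_m\triangleleft G_m$ and $\XGr\triangleleft\Pi$ allow either factor to be moved to either side.
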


{\em Efficient computation in $G_{2^k}$ -- } Our next goal is to present efficient methods for computing with $G_m$. Suppose we consider a fixed value of $m$. Any labeling of group elements will have length proportional to $s=\log_2 |G_m|$. If $m$ is odd then $\log_2 |G_m|=(2^n-1)\log_2 m + \log_2 |\Pi|$ whereas if $m=2^k$ then $\log_2 |G_{2^k}|=\sum_{t=1}^k(k-t+1){n\choose t} + \log_2 |\Pi|$. Therefore, $s=\Omega(2^n)$ whenever $m$ is odd, and in general we cannot efficiently represent elements of $G_m$ as the number of qubits grows. However, $s=O(n^k)$ for the special case $m=2^k$, and the story is different. We focus on this special case for the remainder of this Letter.

An element $g\in G_m$ can be written as a product $g=uvw$ where $w\in W_m$ is a diagonal matrix, $v\in \CXGr$ is a CNOT circuit, and $u\in \XGr$ is a tensor product of bit-flips. This transforms $n$-qubit quantum states as $g|b\rangle = \omega_{m}^{p(b)}|Bb+c\rangle$ where $p\in {\cal W}$, $B\in \mathrm{GL}_n({\mathbb F}_2)$, and $c\in \FF_2^n$. Group elements are in bijective correspondence with the triples $(p,B,c)$. The polynomial $p$ has maximum degree $k$ and at most $\sum_{t=0}^k {n\choose t}=O(n^k)$ nonzero coefficients, each contained in $\ZZ_{2^k}$.

The product of group elements $g_1,g_2\in G_m$,
\begin{equation}
g_2g_1|b\rangle = \omega_m^{p_1(b)+p_2(B_1b+c_1)}|B_2B_1b\oplus B_2c_1\oplus c_2\rangle,
\end{equation}
is given by the triple
\begin{equation}
(p_1(x)+p_2(B_1x\oplus c_1),B_2B_1,B_2c_1\oplus c_2).
\end{equation}
The products $B_2B_1$ and $B_2c_1\oplus c_2$ can be computed in $O(n^3)$ time, and polynomials in ${\cal W}$ can be added in $O(n^k)$ ring operations.  We need to show that $p_2(B_1x\oplus c_1)$ can also be computed efficiently.

Consider a triple $(p,B,c)$ and let $B_j$ denote the $j$th row of $B$ and $J_j=\supp{B_j}$. Define $x'=Bx\oplus c$. Then for any $j\in [n]$, using Eqs.~\ref{eq:xor} and \ref{eq:pJ},
\begin{equation}
x_j'(x) =(\bigoplus\limits_{\ell\in J_j}x_j)\oplus c_j = \left\{\begin{array}{cc}
p_{J_j}(x) & \mathrm{if}\ c_j=0 \\
1-p_{J_j}(x) & \mathrm{if}\ c_j=1
\end{array}\right.
\end{equation}
has maximum degree $k$. When we substitute $x'=x_1'\dots x_n'$ into the degree $k$ polynomial $p(x)$, computations occur with coefficients in $\ZZ_{2^k}$. We compute each monomial $(x')^\alpha$ with $O(k)$ multivariate polynomial multiplications, each of which can be done term-by-term in $O(n^{2k+1})$ ring operations. We compute the term $(-2)^{|\alpha|-1}p_\alpha (x')^\alpha$ with an additional $O(n^k)$ ring operations to multiply each term of $(x')^\alpha$ by a $(-2)^{|\alpha|-1}p_\alpha$ and accumulate the result. There are $O(n^k)$ terms in $p(x)$, so the total number of ring operations to compute $p(x')$ is $O(n^{3k+1})$. If $c\neq 0^n$, it is possible that $p(x')$ has a non-zero constant term. With an additional $O(n^k)$ ring operations, $p(x')$ can be mapped to an equivalent polynomial in ${\cal W}$.

Uniformly sampling from $G_{2^k}$ is equivalent to uniformly and independently sampling from ${\cal W}$, ${\mathrm GL}_n(\FF_2)$, and $\FF_2^n$. This can be done efficiently since elements of ${\cal W}$ have maximum degree $k$; see also \cite{randall93,supp}.

Given a triple $(p,B,c)$, we synthesize a corresponding circuit from products of CNOT gates, bit-flip gates, and a single qubit $m$-phase gates. Our goal is to efficiently synthesize a circuit whose size (number of gates) is polynomial in $n$ but not to optimize this circuit. We independently synthesize circuits coinciding with $p$, $B$, and $c$. Since $c$ corresponds to $X(c)$, and a CNOT circuit for $B$ can be found by Gaussian elimination \cite{gottesman97}, the new part of the algorithm synthesizes a circuit for $p$.

We describe the circuit synthesis for $p$ informally. The algorithm proceeds in $k$ rounds. Begin by initializing a working polynomial $q(x)\leftarrow p(x)$, set a round counter $t\leftarrow k$, and set a quantum circuit $U\leftarrow I$. Here ``$\leftarrow$'' denotes assignment. In round $t$, we synthesize a circuit corresponding to a polynomial $p^{(t)}(x)$ that coincides with $q(x)$ on its degree $t$ terms. For each of the $O(n^t)$ degree-$t$ terms $(-2)^{|\alpha|-1}p_\alpha x^\alpha$ of $q(x)$, we apply the constant-sized circuit $g_\alpha:=\pi_{J,j}\left(Z_{2^k}(j)\right)^{p_\alpha}\pi_{J,j}^\dagger$ setting $U\leftarrow g_\alpha U$, where $J=\supp{\alpha}$ as in the proof of Lemma~\ref{lemma:W}. The product of the $g_\alpha$ corresponds to $p^{(t)}(x):=\prod_{\alpha\subseteq [n],|\alpha|=t}p_\alpha p_J(x)$. Therefore we update $q(x)\leftarrow q(x)-p^{(t)}(x)$, which now has maximum degree $t-1$, decrement the round counter, and proceed to the next round. The algorithm terminates when $q(x)=0$ and $t=0$. The total algorithm run-time and circuit size of the output $U$ is $O(n^k)$.

{\em Twirling over $G_{2^k}$ --} A quantum channel is a completely-positive trace-preserving map whose operator sum decomposition is ${\cal E}(\rho) = \sum_k A_k\rho A_k^\dagger$
where $\sum_k A_k^\dagger A_k=I$. The twirl of ${\cal E}$ over a finite group $G$ ($G$-twirl) is given by
\begin{equation}\label{eq:twirl}
\bar{\cal E}_G(\rho) := \frac{1}{|G|}\sum_{U\in G}U^\dagger {\cal E}(U\rho U^\dagger)U.
\end{equation}

In what follows, we use several facts about group twirls. If $G=AB$ is a direct product of groups then $\bar{{\cal E}}_G(\rho)=\overline{({\bar {\cal E}}_A)}_B(\rho)$, and if $A$ is a normal subgroup of $G$ (denoted $A\triangleleft G$) then $\bar{{\cal E}}_G(\rho)=\overline{({\bar {\cal E}}_A)}_{G/A}(\rho)$, where the twirl over the factor group $G/A$ is over a set of coset representatives. Twirling any map over the Pauli group produces a Pauli channel \cite{dankert09}. Consider a Pauli channel ${\cal E}(\rho) = \sum_{Q\in \PauliGr} \eta_Q Q\rho Q$. Twirl this channel over any finite group $G$ that has a permutation action on the set $\PauliGr$. The orbit of $P\in \PauliGr$ is $O_P:=\{V^\dagger PV\ |\ V\in G\}$ and the stabilizer is $S_P:=\{ V\in G\ |\ V^\dagger PV=P\}$. The orbits define an equivalence relation $P\sim Q$ if and only if $O_P=O_Q$. This relation partitions $\PauliGr$ into a disjoint union of orbits. By the orbit-stabilizer theorem and Lagrange's theorem \cite{artin98}, $|O_P|=|G/S_P|=|G|/|S_P|$. Therefore the twirl, Eq.~\ref{eq:twirl}, can be written
\begin{equation}
\bar{\cal E}_{G}(\rho) = \sum_{C\in {\cal C} } \sum_{P\in O_C}\left(\frac{\sum_{Q\in O_C}\eta_Q}{|O_C|}\right) P\rho P,\label{eq:orbits}
\end{equation}
where ${\cal C}$ is a set of representative elements, one from each orbit.

These facts allow us to compute the twirl over $G_{2^k}$ when $k>1$ by expressing it as a sequence of twirls. We begin by decomposing the group. Let $\tilde{W}_{2^k}:=W_{2^k}\setminus (\bar{W}_{2^k}\setminus\{I\})$ and recall that $\bar{W}_{2^k}:=\langle Z_{2^k}(j)\rangle$. Then $W_{2^k}=\tilde{W}_{2^k}\bar{W}_{2^k}$. Since $\CZGr\triangleleft\tilde{W}_{2^k}$ and $\ZGr\triangleleft\bar{W}_{2^k}$, we form the corresponding factor groups. Therefore an element $w\in W_{2^k}$ can be written as $w=\tilde{w}\bar{w}=\tilde{w}_1\tilde{w}_2\bar{w}_1\bar{w}_2$ where $\tilde{w}_1$ labels cosets $\tilde{w}_1\CZGr$, $\tilde{w}_2\in\CZGr$, $\bar{w}_1$ labels cosets $\bar{w}_1\ZGr$, and $\bar{w}_2\in\ZGr$. Finally, by Lemma~\ref{lem:semidirect}, any element $g\in G_{2^k}$ factors as $g=uvw$ where $u\in\XGr$, $v\in\CXGr$, and $w\in W_{2^k}$. Therefore, we have $g=u\bar{w}_2'v\tilde{w}_2\bar{w}_1\tilde{w}_1$ where $\bar{w}_2'=v\bar{w}_2v^\dagger\in\ZGr$.

Our strategy is to use the decomposition to express the $G_{2^k}$-twirl as a sequential $\PauliGr$-twirl, $\CXGr$-twirl, $\CZGr$-twirl, $\bar{W}_{2^k}/\ZGr$-twirl, and $\tilde{W}_{2^k}/\CZGr$-twirl. Each twirl can be computed in a straightforward manner using the facts we have described and reduces the number of independent parameters describing the channel until we have twirled over the whole of $G_{2^k}$ \cite{supp}. The final twirled map is
\begin{equation}
{\cal E}(\rho) = \beta_I\rho + \beta_Z\sum_{P\in \ZGr\setminus\{I\}} P\rho P + \beta_R \sum_{P\in\PauliGr\setminus\ZGr} P\rho P.
\end{equation}
In the Liouville representation in the Pauli basis which has matrix elements $R^{({\cal E})}_{PQ}=\mathrm{Tr}(P{\cal E}(Q))/4^n$ where $P$ and $Q$ are $n$-qubit Pauli operators, this map has three diagonal blocks corresponding to $I$, $\ZGr\setminus\{I\}$, and $\PauliGr\setminus\ZGr$ with elements $1$, $\alpha_Z:=1-4^n\beta_R$, and $\alpha_R:=1-2^n\beta_Z-(4^n-2^n)\beta_R$, respectively.

{\em Conclusion -} Our results enable scalable benchmarking of a natural family of non-Clifford circuits related to quantum error-correcting codes. Because our procedure scales, in principle it allows efficient benchmarking of isolated non-Clifford gates as well as large sub-circuits for state distillation \cite{bravyi05,duclos15} or repeat-until-success protocols \cite{paetznick14}. These sub-circuits can be characterized with our procedure using physical gates or logical gates on protected qubits. Together with standard Clifford benchmarking, our procedures enable characterization of the full range of gates used in the leading fault-tolerant quantum computing protocols. As multi-qubit benchmarking is well within experimental reach, we expect an optimized implementation of our procedure to be quite practical.

\begin{acknowledgments}
We acknowledge support from ARO under contract W911NF-14-1-0124.
\end{acknowledgments}

\end{thebibliography}

\newpage
\pagebreak
\widetext
\newpage
\begin{center}
\textbf{\large Supplemental Materials: Scalable randomized benchmarking of non-Clifford gates}
\end{center}
\vspace{1cm}
\setcounter{equation}{0}
\setcounter{figure}{0}
\setcounter{table}{0}
\setcounter{page}{1}
\makeatletter
\renewcommand{\theequation}{S\arabic{equation}}
\renewcommand{\thefigure}{S\arabic{figure}}
\renewcommand{\bibnumfmt}[1]{[S#1]}
\renewcommand{\citenumfont}[1]{S#1}

Let $\Lambda_{x_1x_2\dots x_pj}(U)$ denote a controlled-$U$ gate targeting qubit $j$ with controls $x_1$, $x_2$, $\dots$, $x_p$. Let ${\cal O}^m$ and ${\cal E}^m$ denote odd and even strings of length $m$ respectively. Given a subset $J\subseteq [n]$, $\bar{J}:=[n]\setminus J$. For a support $J$ and string $u\in\{0,1\}^{|J|}$, $u[J]\in\ZZtn$ denotes the string whose restriction to $J$ is $u$ and is zero otherwise. The indicator function $\delta(u,v)$ is one if $u=v$ and zero otherwise. 

The circuit diagrams in Fig.~\ref{fig:rewrite} are an intuitive way to present the rewriting identities, but they can also be written concisely as
\begin{align}
\Lambda_{ij}(X)\Lambda_{x_1x_2\dots x_pj}(Z_{m}^a)\Lambda_{ij}(X) & =\Lambda_{ix_1x_2\dots x_pj}(Z_{m}^{-2a})\Lambda_{x_1x_2\dots x_pi}(Z_{m}^a)\Lambda_{x_1x_2\dots x_pj}(Z_{m}^a), \label{eq:c1}\\
\Lambda_{ij}(X)\Lambda_{x_1x_2\dots x_pi}(Z_{m}^a)\Lambda_{ij}(X) & =\Lambda_{x_1x_2\dots x_pi}(Z_{m}^a), \label{eq:c2}\\
\Lambda_{ij}(X)\Lambda_{x_1x_2\dots x_pij}(Z_{m}^a)\Lambda_{ij}(X) & =\Lambda_{x_1x_2\dots x_pij}(Z_{m}^{-a})\Lambda_{x_1x_2\dots x_pi}(Z_{m}^a). \label{eq:c3}
\end{align}

For completeness, we give the proofs of Lemma~\ref{lem:semidirect}, Lemma~\ref{lem:iso}, Equation~\ref{eq:xor}, and Lemma~\ref{lemma:W}.

{\em Proof of Lemma~\ref{lem:semidirect}.} By definition of the semi-direct product, we must show that (a) $W_m\cap\Pi=\{I\}$, where $I$ denotes the identity matrix, (b) $G_m=\Pi W_m$, and (c) $W_m\triangleleft G_m$. Every element of $\Pi$ is a permutation matrix. Therefore the only diagonal matrix in $\Pi$ is the identity matrix, proving (a). Consider an element $g=g_N\dots g_2g_1\in G_m$ expressed as a product of generators. Suppose the $\ell$th generator is $g_\ell=Z_m(j)\in W_m$. By repeated application of rewriting identities, we can write $g=g_m\dots g_{\ell+1}g_{\ell-1}\dots g_2g_1w_\ell$ where $w_\ell=g_1^\dagger\dots g_{\ell-1}^\dagger g_\ell g_{\ell-1}\dots g_1\in W_m$. Repeating this argument for all occurrences of $Z_m(j)$ in $g$, we obtain $g=\pi w$ where $\pi\in\Pi$ and $w\in W_m$; thus $G_m<\Pi W_m$. Conversely, since $\Pi$ and $W_m$ are both subgroups of $G_m$, any $\pi w\in \Pi W_m<G_m$. This proves (b). By (b), each element of $G_m$ can be written as $g=\pi w_g$ for some $\pi\in\Pi$ and $w_g\in W_m$, so $gwg^\dagger=\pi w_gww_g^\dagger \pi^\dagger=\pi w'\pi^\dagger$ where $w'=w_gww_g^\dagger\in W_m$. Decomposing $\pi$ into a product of generators $X(j)$ and $\Lambda_{ij}(X)$ gates and applying rewriting identities to each element in the product, we find $\pi w'\pi^\dagger=w''\in W_m$ and thus prove (c). \qed

{\em Proof of Lemma~\ref{lem:iso}.} We need to show that $\Phi(p+p')=\Phi(p)\Phi(p')$, $\Phi(-p)=\Phi(p)^\dagger$, and that $\Phi$ is a bijection. First, $(p+p')(x)=p(x)+p'(x)$ implies that $\langle b|\Phi(p+p')|b\rangle=\omega_m^{p(b)}\omega_m^{p'(b)}$. On the other hand, $\langle b|\Phi(p)\Phi(p')|b\rangle =\omega_m^{p(b)}\omega_m^{p'(b)}$. Next, $\langle b|\Phi(p)^\dagger|b\rangle=(\omega_m^\ast)^{p(b)}=(\omega_m^{-1})^{p(b)}=\omega_m^{-p(b)}=\langle b|\Phi(-p)|b\rangle$.
Suppose that $\Phi(p)=\Phi(p')$ for some $p,p'\in{\cal R}$. For all $n$-bit strings $b$, we have $\omega_m^{p(b)}=\omega_m^{p'(b)}$ and therefore $p=p'$. Thus $\Phi$ is injective. Now consider some $D\in D_m$ with matrix elements $\langle b|D|b\rangle=\omega_m^{f(b)}$. We write $f(x)=\sum_{s\in\FF_2^n} f(s)\delta(x,s)$ and observe
\begin{equation}
\delta(x,s)=\prod_{j=1}^n\left\{\begin{array}{cc}
x_j & \mathrm{if}\ s_j=1 \\
1-x_j & \mathrm{if}\ s_j=0 \end{array}\right\}\in {\cal R}.
\end{equation}
Therefore, $f(x)\in {\cal R}$ and $D(f)=D$, proving that $\Phi$ is surjective. \qed

{\em Proof of Equation~\ref{eq:xor}.} It can be proven by induction on $N$. It is obviously true when $N=1$. If we assume that Eq.~\ref{eq:xor} holds for some $N$, then
\begin{equation}
(\bigoplus_{j=1}^N x_j)\oplus x_N = [x_1\oplus\dots\oplus x_N](1-x_{N+1}) + (1-[x_1\oplus\dots\oplus x_N])x_{N+1}
\end{equation}
implies that it holds for $N+1$ and hence for all $N\geq 1$. \qed

{\em Proof of Lemma~\ref{lemma:W}.} Given a non-empty subset $J\subseteq [n]$, fix any element $j\in J$, and let 
\begin{equation}
\pi_{J,j} :=\prod_{\substack{k\in J\\ k\neq j}} \Lambda_{kj}(X)\ \mathrm{and}\ \ p_J(x) :=\sum_{\substack{\alpha\subseteq J\\|\alpha|\neq 0}}(-2)^{|\alpha|-1}x^\alpha.
\end{equation}
By Eq.~\ref{eq:xor}, we see that $\Phi(p_J)=\pi_{J,j}Z_m(j)\pi_{J,j}^\dagger\in W_m$. We begin by observing that $\Phi(Z_m(j))=x_j$. Now suppose that for any $\alpha$ satisfying $0<|\alpha|\leq\ell$, there exists a $w_\alpha\in W_m$ such that $\Phi^{-1}(w_\alpha)=(-2)^{|\alpha|-1}x^\alpha$. We will show that for any $\alpha$ such that $|\alpha|=\ell+1\leq n$, there exists $w_{\alpha}\in W_m$ such that $\Phi^{-1}(w_{\alpha})=(-2)^{|\alpha|-1}x^\alpha$. Let $J=\supp{\alpha}$ and
\begin{equation}
w_\alpha = \pi_{J,j}Z_m(j)\pi_{J,j}^\dagger\prod_{\substack{\beta\subset J\\|\beta|\neq 0}} w_\beta^\dagger.
\end{equation}
Then
\begin{equation}
\Phi^{-1}(w_\alpha) = \Phi^{-1}(\pi_{J,j}Z_m(j)\pi_{J,j}^\dagger) + \sum_{\substack{\beta\subset J\\|\beta|\neq 0}} \Phi^{-1}(w_\beta^\dagger)
= p_J(x) - \sum_{\substack{\beta\subset J\\|\beta|\neq 0}} (-2)^{|\beta|-1}x^\beta = (-2)^{|\alpha|-1}x^\alpha.
\end{equation}
By induction, there exists a $w_\alpha\in W_m$ such that $\Phi^{-1}(w_\alpha)=(-2)^{|\alpha|-1}x^\alpha$ for any $\alpha$ satisfying $0<|\alpha|\leq n$. Eq.~\ref{eq:r1} and Eq.~\ref{eq:r3} are the complete set of relations for conjugating $W_m$ by $\Pi$, and therefore every monomial $x^\alpha$ in $\Phi^{-1}(W^{(n)}_m)$ has a coefficient that is a multiple of $2^{|\alpha|-1}$. We take all linear combinations of these monomials over $\ZZ_m$. \qed

The corollary to Lemma~\ref{lemma:W} allows us to compute the order of $G_m$ for any $m$ and any number of qubits $n$. Both to verify this formula and get a sense of the size of the group, we have tabulated orders of the smallest groups $G_m$ in Table~\ref{tab:orders} by explicitly constructing them. The group order increases rapidly with $n$ but is comparable in size to the Clifford group.

\begingroup\squeezetable
\begin{table}[h]
\centering
\begin{ruledtabular}\begin{tabular}{rrrrrrrrr}
$m$ & 1 & 2 & 3 & 4 & 5 & 6 & 7 &  8 \\ \hline
$n=1$ & 2 & 4 & 6 & 8 & 10 & 12 & 14 & 16\footnote{smallest group for benchmarking $T$ gate.}\\
$n=2$ & 24 & 96 & 648 & 768 & 3000 & 2592 & 8232 & 6144\footnote{smallest group for benchmarking  $\Lambda_{ij}(\sqrt{Z})$ gate.}\\
$n=3$ & 1344 & 10752 & 2939328 & 688128 & 105000000 & 23514624 & 1106841792 &  88080384\footnote{smallest group for benchmarking $\Lambda_{ijk}(Z)$ gate.}
\end{tabular}\end{ruledtabular}
\caption{Group orders for small non-Clifford matrix groups with $n\leq 3$. All groups were explicitly constructed from generating sets.\label{tab:orders}}
\end{table}
\endgroup

We provide more details about how to sample a uniformly random element of $G_m$. Assume we have access to a string of uniformly random bits. We take $n$ bits as the element $c\in\ZZ_2^n$. In time $M(n)+O(n^2)$ and with $n^2+3$ random bits, where $M(n)$ is the time to multiply two $n$ by $n$ matrices, we can generate a uniformly random non-singular matrix $B\in\mathrm{GL}_n(\FF_2)$ \cite{S_randall93}. Finally, for all $\alpha$ satisfying $0<|\alpha|\leq k$, we take $k$ random bits to draw a uniformly random element from $\ZZ_{2^k}$, which we multiply by $(-2)^{|\alpha|-1}$ modulo $2^k$ to obtain the corresponding coefficient of the polynomial $p\in {\cal W}$. There are $O(n^k)$ non-zero coefficients, so we consume $O(kn^k)$ random bits to sample an element uniformly at random from ${\cal W}$.

We use several facts to compute the $G_m$-twirl. We prove these facts now. If $G=AB$ is a direct product of groups $A$ and $B$, then
\begin{equation}
\bar{{\cal E}}_G(\rho) = \frac{1}{|G|}\sum_{U\in G} U^\dagger {\cal E}(U\rho U^\dagger)U = \frac{1}{|A||B|}\sum_{S\in A,T\in B} T^\dagger S^\dagger {\cal E}(ST\rho T^\dagger S^\dagger)ST = \frac{1}{|B|}\sum_{T\in B} T^\dagger \bar{{\cal E}}_A(T\rho T^\dagger)T = \overline{({\bar {\cal E}}_A)}_B(\rho).
\end{equation}
If instead $A\triangleleft G$, then
\begin{equation}
\bar{{\cal E}}_G(\rho) = \frac{1}{|G/A|}\sum_{T\in G/A} T^\dagger \bar{{\cal E}}_A(T\rho T^\dagger)T = \overline{({\bar {\cal E}}_A)}_{G/A}(\rho)
\end{equation}
where the sum over $T\in G/A$ is over a set of coset representatives, i.e.~each $T$ is any representative in the coset $TA$. It is well-known that twirling any map over the Pauli group produces a Pauli channel. This can be seen from
\begin{align}
\bar{\cal E}_{\PauliGr} & = \sum_k \frac{1}{|\PauliGr|} \sum_{P\in \PauliGr} P^\dagger A_kP\rho P^\dagger A_k^\dagger P  = \sum_k \frac{1}{|\PauliGr|} \sum_{P,Q,R\in \PauliGr} \gamma_Q^{(k)}\gamma_R^{(k)\ast} PQP\rho PRP = \sum_{Q\in \PauliGr} \left(\sum_k |\gamma_Q^{(k)}|^2\right) Q\rho Q.
\end{align}
We have expressed each Kraus operator in the Pauli operator basis $A_k=\sum_Q \gamma_Q^{(k)} Q$. We wrote $PQP=(-1)^{\omega(P,Q)}Q$ where $\omega(P,Q)=0$ if $[P,Q]=0$ and $\omega(P,Q)=1$ otherwise, and used the fact that $\frac{1}{|\PauliGr|}\sum_{P\in \PauliGr}(-1)^{\omega(P,Q)+\omega(P,R)}=\delta(P,Q)$. Trace preservation ensures that $\sum_Q\sum_k |\gamma_Q^{(k)}|^2=1$. This twirl reduces a general map to one described by $4^n$ parameters. For convenience let $\eta_Q:=\sum_k |\gamma_Q^{(k)}|^2$.

Eq.~\ref{eq:orbits} can be derived from the definition of the twirl as follows
\begin{equation}
\bar{\cal E}_{G}(\rho) = \sum_{Q\in \PauliGr} \sum_{P\in O_Q}\frac{1}{|O_Q||S_Q|}|S_Q|\eta_Q P\rho P = \sum_{C\in {\cal C} } \sum_{Q\in O_C}\sum_{P\in O_Q}\frac{1}{|O_Q||S_Q|}|S_Q|\eta_Q P\rho P = \sum_{C\in {\cal C} } \sum_{P\in O_C}\left(\frac{\sum_{Q\in O_C}\eta_Q}{|O_C|}\right) P\rho P.
\end{equation}

For completeness, we conclude now with details about each step of the $G_{2^k}$-twirl: the $\CXGr$-twirl, the $\CZGr$-twirl, the explicit orbits for each, the $\bar{W}_{2^k}/\ZGr$-twirl, and the final $\tilde{W}_{2^k}/\CZGr$-twirl.

The $\CXGr$-twirl is given by Eq.~\ref{eq:orbits}, and we need only enumerate the orbits. There are $5$ orbits corresponding to the representative elements ${\cal C}_\CXGr:=\{I,X(1),Z(1),Y(1),Y(1,2)\}$. Simple counting gives $|O_{X(1)}|=|O_{Z(1)}|=2^n-1$, $|O_{Y(1)}|=2^{n-1}(2^n-1)$, and $|O_{Y(1,2)}|=4^n/2-3\cdot 2^{n-1}+1$. Note that $|O_{Y(1,2)}|=|O_{X(1)}|(2^{n-1}-1)$ and $|O_{X(1)}|+|O_{Y(1,2)}|=|O_{Y(1)}|$. After the $\CXGr$-twirl, the Pauli channel is described by $5$ parameters $\beta_I$, $\beta_X$, $\beta_Z$, $\beta_Y$, and $\beta_{YY}$ that correspond to averages over the Pauli channel parameters in each respective orbit.

The $\CZGr$-twirl has an exponential number of orbits. Let $u^\ast\in [n]$ denote the first non-zero coordinate of a non-zero $u\in\FF_2^n$. The orbits correspond to representative elements ${\cal C}_\CZGr:=\XGr\cup (\ZGr\setminus\{I\})\cup\{X(u)Z(u^\ast)\ :\ u\in\FF_2^n,u\neq 0\}$. There are $2^n-1$ orbits of each non-identity type and $|O_{X(u)}|=|O_{X(u)Z(u^\ast)}|=2^{n-1}$ for all $u$. Comparing the $\CZGr$ orbits to the $\CXGr$ orbits, we see that $O_{Z(1)}=\cup_u O_{Z(u)}$, $O_{Y(1)}=\cup_u O_{X(u)Z(u^\ast)}$, and $O_{X(1)}\cup O_{Y(1,2)}=\cup_u O_{X(u)}$. Furthermore $O_{X(u)}$ contains one element from $O_{X(1)}$ and $2^{n-1}-1$ elements from $O_{Y(1,2)}$. Therefore parameters for $O_{X(1)}$ and $O_{Y(1,2)}$ are averaged into a new parameter $\beta_{X-YY}=(\beta_X+(2^{n-1}-1)\beta_{YY})/2^{n-1}$ that is simply the average over all parameters $\eta_Q$ with $Q\in O_{X(1)}\cup O_{Y(1,2)}$ upon substituting $\beta_X$ and $\beta_{YY}$. After the $\CZGr$-twirl, the Pauli channel is described by $4$ parameters.

These are the explicit orbits of the action of $\CXGr$ and $\CZGr$ on $\PauliGr$. The $\CXGr$ orbits are
\begin{align*}
O_{I}  & = \{I\} \\
O_{X(1)} & = \XGr\setminus\{I\} \\
O_{Z(1)} & = \ZGr\setminus\{I\} \\
O_{Y(1)} & = \{ X(J)Z(u[J]+v[\bar{J}])\ :\ |J|>0,u\in{\cal O}^{|J|} \} \\
O_{Y(1,2)} & = \{ X(J)Z(u[J]+v[\bar{J}])\ :\ |J|>1,u\in{\cal E}^{|J|} \} \cup \{ X(J)Z(v[\bar{J}])\ :\ |J|>0,|v|>0 \}
\end{align*}
where $J\subseteq [n]$ ranges over all subsets and $v\in\FF_2^{|\bar{J}|}$ ranges over all strings. It is straightforward to verify that these are the orbits since $\Lambda(X)$ preserves the parity of the number of $Y$-type tensor factors and does not mix $X$-type and $Z$-type factors. The $\CZGr$ orbits are
\begin{align*}
O_{I} & = \{I\} \\
O_{Z(u)} & = \{Z(u)\} \\
O_{X(u)} & = \{X(J)Z(w[J]+v[\bar{J}])\ :\ w\in{\cal E}^{|J|}\} \\
O_{X(u)Z(u^\ast)} & = \{X(J)Z(w[J]+v[\bar{J}])\ :\ w\in {\cal O}^{|J|}\}
\end{align*}
where $u\in\FF_2^n$, $u\neq 0$, $v\in\FF_2^{|\bar{J}|}$ and $J=\supp{u}$. This is also straightforward to show from the action of $\Lambda(Z)$ on $\PauliGr$.

The $\bar{W}_{2^k}/\ZGr$-twirl is over coset representatives of $\langle Z_{2^k}(j)\rangle/\ZGr$.  Let $P=X(u)Z(v)\in\PauliGr$ and $P'(w)=X(J)Z(w[J]+v)$ where $J=\supp{u}$. Twirling all tensor factors gives
\begin{equation}
P\rho P \mapsto \frac{1}{2^{|J|}}\sum_{w\in \FF_2^{|J|}} P'(w)\rho P'(w)^\dagger. 
\end{equation}
Tensor factors that are $X$ or $Y$ become equal sums of both, whereas $Z$ commutes. Therefore the $\CZGr$-twirl classes $O_{X(u)}$ and $O_{X(u)Z(u^\ast)}$ for fixed $u$ (and fixed $J=\supp{u}$) are averaged together equally. Since these orbits all have the same size and partition $O_{X(1)}\cup O_{Y(1,2)}$ and $O_{Y(1)}$ respectively, the parameters $\beta_{X-YY}$ and $\beta_Y$ are averaged into a new parameter $\beta_R:=(\beta_{X-YY}+\beta_Y)/2$. Again this is simply the average of all parameters $\eta_Q$ with $Q\in O_{X(1)}\cup O_{Y(1,2)}\cup O_{Y(1)}$. After the $\bar{W}_{2^k}/\ZGr$-twirl, the Pauli channel is described by $3$ parameters.

The detailed calculations for the $\bar{W}_{2^k}/\ZGr$-twirl are as follows. Conjugating a single qubit Pauli operator $X^uZ^v$ by a coset representative $(Z_{2^k})^a$, such as occurs in a term of Eq.~\ref{eq:twirl}, gives
\begin{equation}
\left[\cos(2\pi a/2^k)I+i\sin(2\pi a/2^k)Z\right]^uX^uZ^v.
\end{equation}
In the $(u,v)=(1,0)$ case, when we twirl only the first tensor factor, $P\rho P$ is taken to
\begin{align}
\frac{1}{2^{k-1}}\sum_{a=0}^{2^{k-1}-1}P_{2:n}\left[\right. & \cos^2(2\pi a/2^k)X\rho X- \cos(2\pi a/2^k)\sin(2\pi a/2^k)Y\rho X \\
& -\cos(2\pi a/2^k)\sin(2\pi a/2^k)X\rho Y+ \left.\sin^2(2\pi a/2^k)Y\rho Y\right]P_{2:n}
\end{align}
where $P_{2:n}$ denotes remaining tensor factors of $P$ on qubits $\{2,\dots,n\}$. Applying the identity $\cos^2 u+\cos^2(\pi/2+u)=1$, we can write
\begin{equation}
\sum_{a=0}^{2^{k-1}-1}\frac{1}{2^{k-1}}\cos^2(2\pi a/2^k) = \sum_{a=0}^{2^{k-2}-1}\frac{1}{2^{k-1}}\left(\cos^2(2\pi a/2^k)+\cos^2(\pi/2+2\pi a/2^k)\right) = 1/2.
\end{equation}
Similar arguments allow us to sum the remaining terms. The odd functions vanish and the even functions average to $1/2$.

The final steps of the twirl over $G_m$ are straightforward but we provide more details. The diagonal PTM representation of ${\cal E}(\rho)$ is
\begin{equation}
R^{(\cal E)}_{QQ}=\beta_I+\beta_Z\sum_{P\in \ZGr\setminus\{I\}}(-1)^{\omega(P,Q)}+\beta_R\sum_{P\in\PauliGr\setminus\ZGr}(-1)^{\omega(P,Q)}.
\end{equation}
Any non-identity Pauli operator commutes with exactly half of the Pauli group and anti-commutes with the other half. If $Q\in \ZGr\setminus\{I\}$ then $R^{\cal E}_{QQ}=\beta_I+(2^n-1)\beta_Z-2^n\beta_R$. Likewise, $R^{(\cal E)}_{II}=\beta_I+(2^n-1)\beta_Z+(4^n-2^n)\beta_R=1$. When $Q\in\PauliGr\setminus\ZGr$, $R^{(\cal E)}_{QQ}=\beta_I-\beta_Z$. Therefore the map is diagonal and has three blocks proportional to the identity. Consider a unitary operator $U$. If $U$ commutes with $Q$ or $Q'$, then by the cyclic property of the trace $R^{(U)}_{QQ'}=\delta(Q,Q')$. If $U$ is a diagonal gate then $R^{(U)}$ is block diagonal with identity matrices in the $\ZGr$ blocks, all off diagonal blocks equal to zero, and an arbitrary $\PauliGr\setminus\ZGr$ block. Since $R^{(U^\dagger)}=(R^{(U)})^{-1}$ and every element of $\tilde{W}_{2^k}/\CZGr$ is diagonal, the final twirled map is
\begin{equation}
\frac{1}{|G|}\sum_{U\in \tilde{W}_{2^k}/\CZGr} R^{(U^\dagger)}R^{(\cal E)}R^{(U)} = R^{(\cal E)}.
\end{equation}

\end{document}